\theoremstyle{nonumberplain}
\newtheorem{proof}{Proof}
\theoremstyle{plain}
\newtheorem{definition}{Definition}
\newtheorem{proposition}{Proposition}
\newtheorem{example}{Example}
\newtheorem{remark}{Remark}
\begin{document}
%
\title{Application of Steganography for Anonymity through
the Internet}



\author{Jacques M. Bahi, Jean-Fran\c cois Couchot, Nicolas
Friot, and  Christophe Guyeux*\\
\\
FEMTO-ST Institute, UMR 6174 CNRS\\
Computer Science Laboratory DISC\\
University of Franche-Comt\'{e}\\
Besan\c con, France\\
\\
\{jacques.bahi,  jean-francois.couchot,
nicolas.friot, christophe.guyeux\}@femto-st.fr\\
* Authors in alphabetic order}


%


\maketitle

\begin{abstract}
In this paper, a novel steganographic scheme based on
chaotic iterations is proposed. This research work takes place into the information hiding
security framework. The applications for anonymity and privacy through the Internet are regarded too.
To guarantee such an anonymity, it should be possible
to set up a secret communication channel into a web page, being both secure and robust. 
To achieve this goal, we propose 
an information hiding scheme being stego-secure, which is the highest level of security in a
well defined and studied category of attacks called ``watermark-only attack''.
This category of attacks is the best context to study steganography-based anonymity through the
Internet. The steganalysis of our steganographic process is also studied in
order to show it security in a real test framework.\newline

\noindent\textbf{Key Words: }Anonymity; Privacy; Internet; Information hiding;
Steganography; Security; Chaotic iterations.
\end{abstract}

%

\section{Introduction}\label{sec:intro}

In common opinion or for non specialists, anonymity through the Internet is
only desirable for malicious use. A frequent thought is that individuals who search or use anonymity tools
have something wrong or shameful to hide.
Thus, as privacy and anonymity software as proxy or Tor~\cite{www:tor,Clark:2007:UAW:1280680.1280687} 
are only used by terrorists, pedophiles, weapon 
merchants, and so on, such tools should be forbidden.
However, terrorism or pedophilia existed in the absence of the Internet.
Furthermore, recent actualities recall to us that, in 
numerous places around the world, to have an opinion 
that diverges from the one imposed by political or 
religious leaders is something considered as negative, 
suspicious, or illegal.
For instance, Saudi blogger Hamza Kashgari jailed, may face execution after tweets about Muhammad.
Generally speaking, the so-called Arab Spring, and 
current fighting and uncertainty in Syria, have taught
to us the following facts. First, the Internet is a
media of major importance, which is difficult to arrest
or to silence, bearing witness to the need for 
democracy, transparency, and efforts to combat 
corruption. 
Second, claiming his/her opinions, making journalism
or politics, is dangerous in various states, and can lead
to the death penalty (as for numerous Iranian bloggers: Hossein
Derakhshan~\cite{wiki:Hossein-Derakhshan}, Vahid Asghari~\cite{wiki:Vahid-Asghari}, etc.).

Considering that the freedom of expression is a 
fundamental right that must be protected, that 
journalists must be able to inform the community without
risking their own lives, and that to be a defender of
human rights can be dangerous, various software have
emerged these last decades to preserve anonymity or
privacy through the Internet. The most famous tool of
this kind is probably Tor, the onion router.
Tor client software routes Internet traffic through a worldwide volunteer network of servers, in order to conceal an user's location or usage from anyone conducting network surveillance or traffic analysis. 
Another example of this kind is given by Perseus~\cite{www:perseus}, 
a firefox plugin~\cite{www:perseus-firefox-plugin} that protect personal
data, without infringing any national crypto regulations, and that 
preserve the true needs of national security.
Perseus replaces cryptography by coding theory 
techniques, such that only agencies with a strong enough 
computer power can eavesdrop
traffic in an acceptable amount of time.
Finally, anonymous proxy servers 
 around the world can help
to keep machines behind them anonymous:
the destination server (the server that ultimately satisfies the web request) receives requests from the anonymizing proxy server, and thus does not receive information about the end user's address. 

These three solutions are not without flaws.
For instance, when considering anonymizers, 
the requests are not anonymous to the 
anonymizing proxy server, which simply moves the 
problem on: are these proxy servers worthy of trust?
Perseus can be broken with enough computer power. And
due to its central position and particular conception,
 Tor is targeted by numerous attacks and presents 
 various weakness (bad apple attack, or the fact that 
 Tor cannot protect against monitoring of traffic at the 
 boundaries of the Tor network).

Considering these flaws, and because having a variety of
solutions to provide anonymity is a good rule of thumb,
a steganographic approach is often regarded in that
context~\cite{bg10:ip}. Steganography can be applied in several ways
to preserve anonymity through the Internet, encompassing
the creation of secret channels through background 
images of websites, into Facebook photo galleries, on
audio or video streams, or in non-interpreted characters
in HTML source codes.
The authors' intention is not to describe precisely
these well-known techniques, but to explain how to
evaluate their security.
They applied it on a new algorithm of steganography based on chaotic iterations 
and data embedding in least significant coefficients.
This state-of-the-art in information hiding security 
is organized as follows.

In Section
\ref{sec:basic-reminders}, some basic reminders concerning both mathematical
notions and notations, and the Most and Least Significant Coefficients
are given. Our new steganographic process called $\mathcal{DI}_3$ 
which is suitable to guarantee
anonymity of data for privacy on the Internet is presented in Section
\ref{sec:new-process-di-1}.  In Section
\ref{sec:dh-security}, a reminder about information hiding
security is realized. The attacks classification in a steganographic framework are
given, and the level of security of $\mathcal{DI}_3$  is studied. In the next
section the security of our new scheme is evaluated. Then, in Section-
\ref{section:steganalysis} the steganalysis of the proposed process is realized, and it is compared with other steganographic schemes in the literature. This research work ends by a conclusion
section, where our contribution is summarized and intended future researches are presented.

\section{Basic Reminders}\label{sec:basic-reminders}

\subsection{Mathematical definitions and notations}\label{sec:math-def}

Let $S^{n}$ denotes the $n^{th}$ term of a sequence $S$, and $V_{i}$ the $i^{th}$ component of a vector $V$. For $a,b \in \mathds{N}$, we use the following notation: $\llbracket a;b \rrbracket=\{a,a+1,a+2,\hdots,b\}$.

\begin{definition}
\label{def:strategy-adapter-k}
Let $\mathsf{k} \in \mathds{N}^\ast$. 
The set of all sequences  which elements belong into $\llbracket 1; \mathsf{k}
\rrbracket$, called \emph{strategy adapters} on $\llbracket 1; \mathsf{k}
\rrbracket$, is denoted by $\mathds{S}_\mathsf{k}$.
\end{definition}

\begin{definition}
The \emph{support of a finite sequence} $S$ of $n$ terms is the finite set
$\mathscr{S}(S)=\left\{ S^k, k < n \right\}$ containing all the distinct
values of $S$. Its cardinality is s.t. $\#\mathscr{S}(S) \leqslant n$.
\end{definition}

\begin{definition}\label{def:injective-sequence}
A finite sequence $S \in \mathds{S}_\mathsf{N}$ of $n$ terms is \emph{injective}
if $n = \#\mathscr{S}(S)$.
It is \emph{onto}
if $N = \#\mathscr{S}(S)$. Finally, it is bijective if and only if it is both
injective and onto, so $n=N = \#\mathscr{S}(S)$.
\end{definition}

\begin{remark}
On the one hand, ``$S$ is injective'' reflects the fact that  all the $n$ terms of
the sequence $S$ are distinct. On the other hand, ``$S$ is onto'' means
that all the values of the set $\llbracket 1;\mathsf{N}\rrbracket$
are reached at least once.
\end{remark}
\subsection{The Most and Least Significant
Coefficients}\label{sec:msc-lsc}

We first notice that terms of the original content $x$ that may be replaced by terms issued
from the watermark $y$ are less important than other: they could be changed 
without be perceived as such. More generally, a 
\emph{signification function} 
attaches a weight to each term defining a digital media,
depending on its position $t$.

\begin{definition}
A \emph{signification function} is a real sequence 
$(u^k)^{k \in \mathds{N}}$. 
\end{definition}

\begin{example}\label{Exemple LSC}
Let us consider a set of    
grayscale images stored into portable graymap format (P3-PGM):
each pixel ranges between 256 gray levels, \textit{i.e.},
is memorized with eight bits.
In that context, we consider 
$u^k = 8 - (k  \mod  8)$  to be the $k$-th term of a signification function 
$(u^k)^{k \in \mathds{N}}$. 
Intuitively, in each group of eight bits (\textit{i.e.}, for each pixel) 
the first bit has an importance equal to 8, whereas the last bit has an
importance equal to 1. This is compliant with the idea that
changing the first bit affects more the image than changing the last one.
\end{example}

\begin{definition}
\label{def:msc-lsc}
Let $(u^k)^{k \in \mathds{N}}$ be a signification function, 
$m$ and $M$ be two reals s.t. $m < M$. 
\begin{itemize}
\item The \emph{most significant coefficients (MSCs)} of $x$ is the finite 
  vector  $$u_M = \left( k ~ \big|~ k \in \mathds{N} \textrm{ and } u^k 
    \geqslant M \textrm{ and }  k \le \mid x \mid \right);$$
 \item The \emph{least significant coefficients (LSCs)} of $x$ is the 
finite vector 
$$u_m = \left( k ~ \big|~ k \in \mathds{N} \textrm{ and } u^k 
  \le m \textrm{ and }  k \le \mid x \mid \right);$$
 \item The \emph{passive coefficients} of $x$ is the finite vector 
   $$u_p = \left( k ~ \big|~ k \in \mathds{N} \textrm{ and } 
u^k \in ]m;M[ \textrm{ and }  k \le \mid x \mid \right).$$
 \end{itemize}
 \end{definition}

For a given host content $x$,
MSCs are then ranks of $x$  that describe the relevant part
of the image, whereas LSCs translate its less significant parts.

\begin{example}
These two definitions are illustrated on Figure~\ref{fig:MSCLSC}, where the
significance function $(u^k)$ is defined as in Example \ref{Exemple LSC}, $M=5$,
and $m=6$.

\begin{figure}[htb]

\begin{minipage}[b]{.98\linewidth}
  \centering
  \centerline{\includegraphics[width=2.5cm]{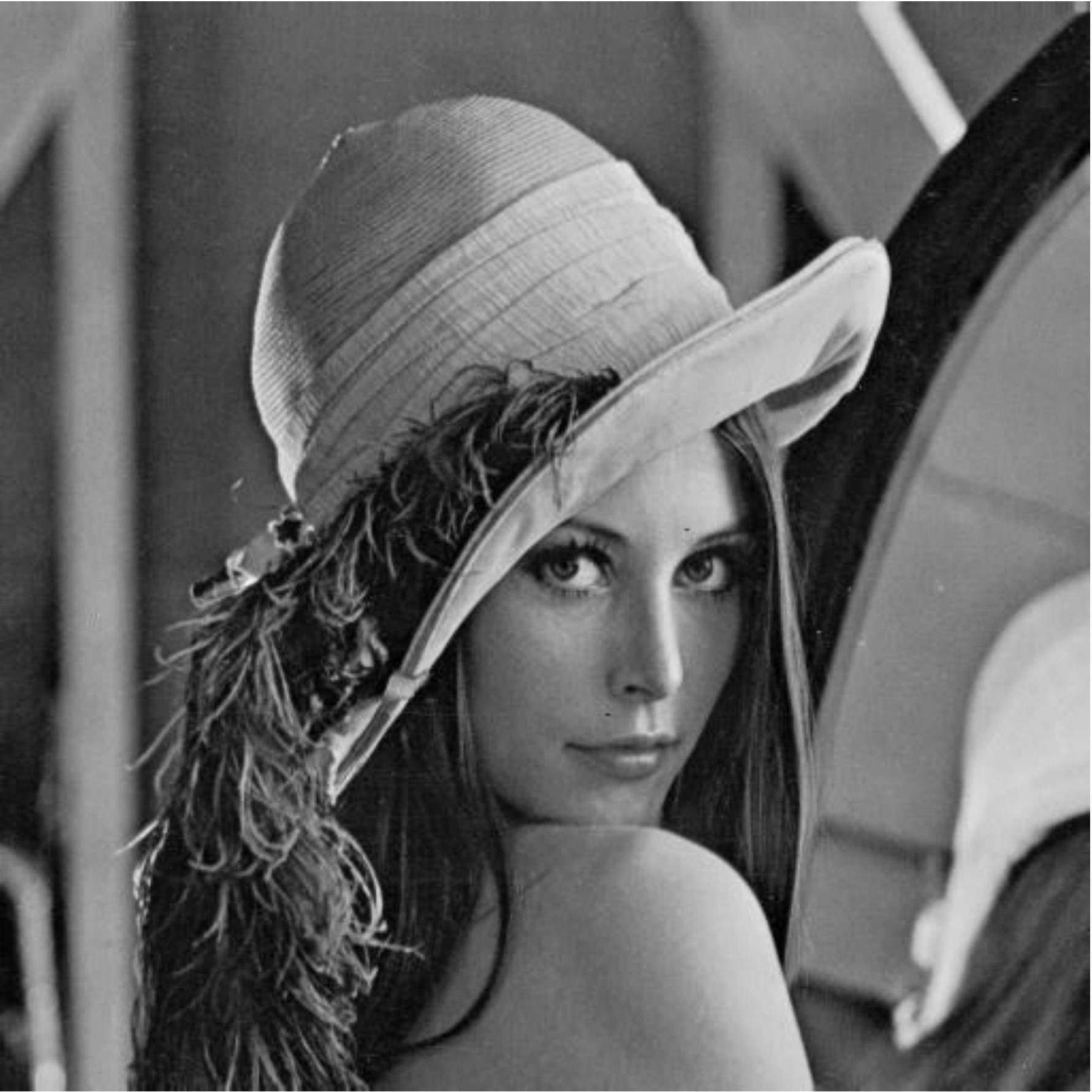}}
  \centerline{(a) Original Lena.}
\end{minipage}
\begin{minipage}[b]{.49\linewidth}
  \centering
    \centerline{\includegraphics[width=2.5cm]{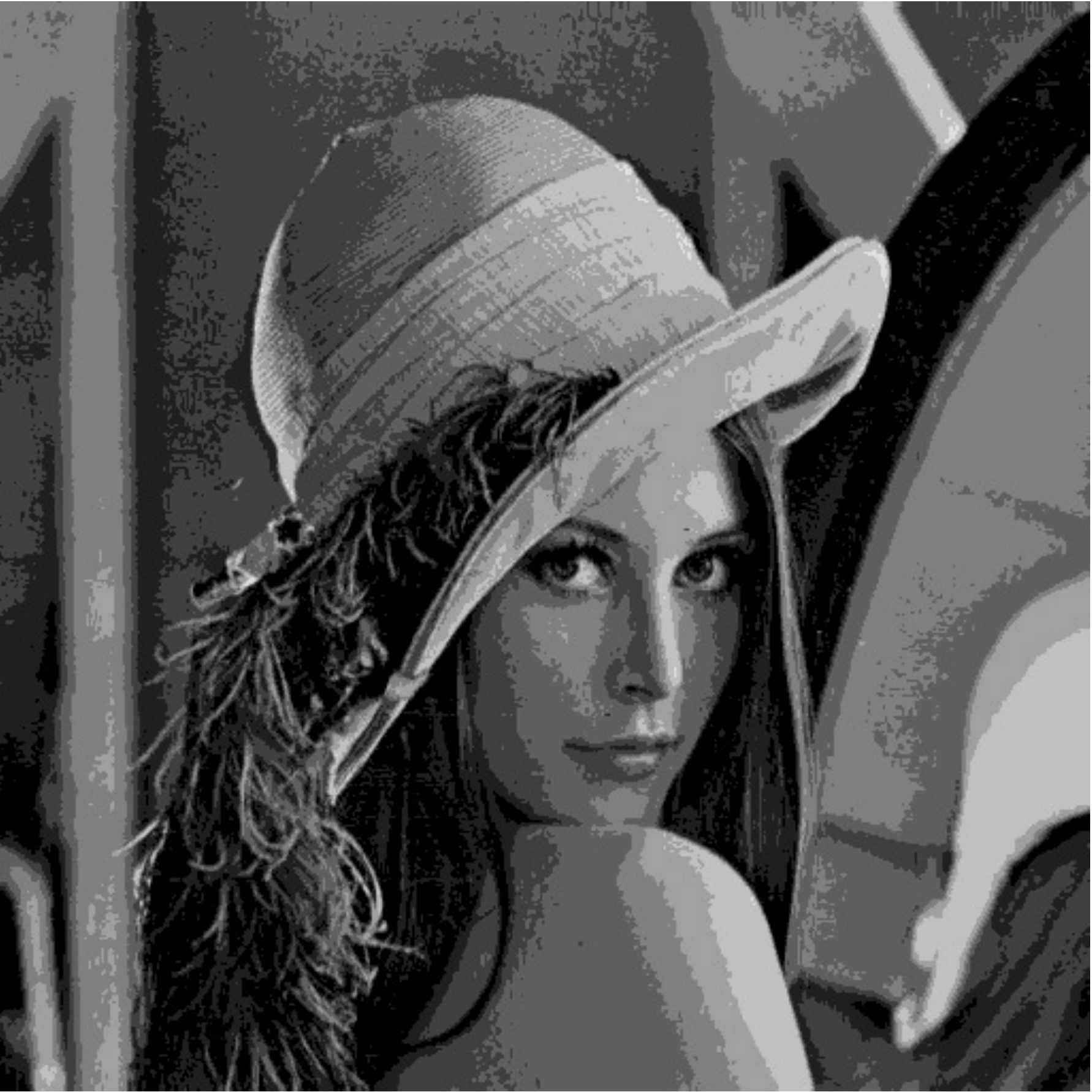}}
  \centerline{(b) MSCs of Lena.}
\end{minipage}
\hfill
\begin{minipage}[b]{0.49\linewidth}
  \centering
    \centerline{\includegraphics[width=2.5cm]{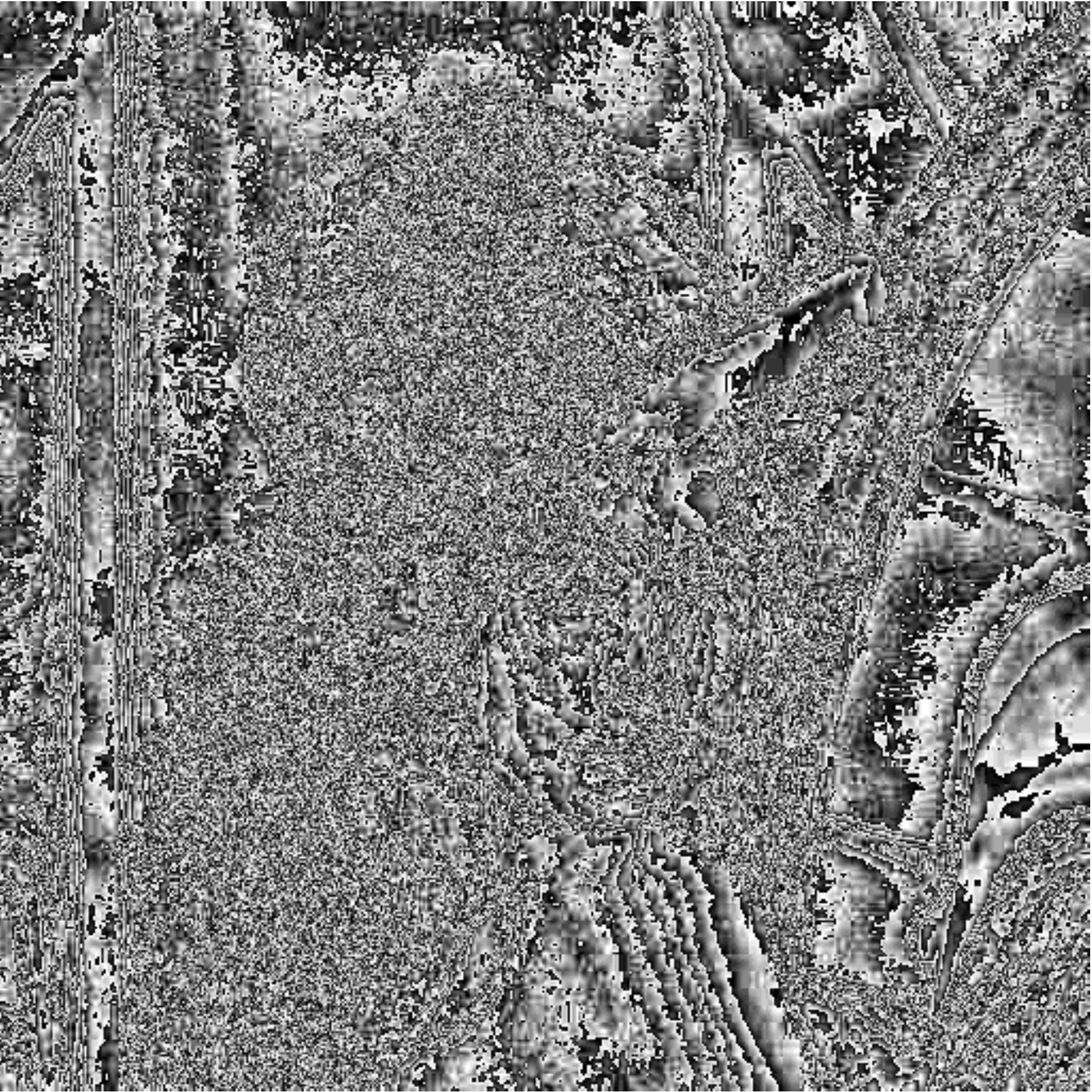}}
  
  \centerline{(c) LSCs of Lena ($\times 17$).}
\end{minipage}
\caption{Most and least significant coefficients of Lena.}
\label{fig:MSCLSC}
\end{figure}
\end{example}

\section{The new Process: $\mathcal{DI}_3$}\label{sec:new-process-di-1}


In this section, a new algorithm, which is inspired from the
scheme $\mathcal{CIS}_2$ described in~\cite{fgb11:ip}, is presented. 
It is easyer to implement for Internet applications, especially in order to
guarantee anonymization. Moreover, this new scheme $\mathcal{DI}_3$ seems to be
faster than $\mathcal{CIS}_2$, which is a major advantage to have fast response
times on the Internet.

Let us firstly introduce the following notations.
$\mathsf{P} \in \mathds{N}^\ast$ is the width, in term of bits, of the
  message to embed into the cover media.
 $\lambda \in \mathds{N}^{\ast}$ is the number of iterations to realize,
 which is s.t. $\lambda > \mathsf{P}$.
 $x^0 \in \mathbb{B}^\mathsf{N}$ is for the $\mathsf{N}$ LSCs of a given
  cover media $C$ supposed to be uniformly distributed.
 $m \in \mathbb{B}^\mathsf{P}$ is the message to hide into $x^0$.
Finally, $S \in \mathbb{S}_\mathsf{P}$ is a strategy such that the
  finite sequence $\left\{S^k, k \in \llbracket \lambda - \mathsf{P}
  +1;\lambda \rrbracket\right\}$ is injective.

\begin{remark}
The width $\mathsf{P}$ of the message to hide into the LSCs of the cover media
$x^0$ has to be far smaller than the number of LSCs.
\end{remark}


The proposed information hiding scheme is defined by:
\begin{definition}[$\mathcal{DI}_3$ Data hiding scheme]\ 
\label{def:process-DI-1}
$\forall (n,i,j) \in
\mathds{N}^{\ast} \times \llbracket 0;\mathsf{N-1}\rrbracket \times \llbracket
0;\mathsf{P-1}\rrbracket$:

\begin{equation*}
\begin{array}{l}
x_i^n=\left\{
\begin{array}{ll}
x_i^{n-1} & \text{ if }S^n\neq i \\
m_{S^n} & \text{ if }S^n=i.
\end{array}
\right.
\end{array}
\end{equation*}
\end{definition}

The stego-content is the Boolean vector $y = x^\lambda \in
\mathbb{B}^\mathsf{N}$, which will replace the former LSCs (LSCs of the cover media are replaced by the
vector $y$).
\section{Data Hiding Security and Robustness}
\label{sec:dh-security}

\subsection{Security and robustness}


Even if security and robustness are neighboring concepts without clearly
established definitions~\cite{Perez-Freire06}, robustness is often considered
to be mostly concerned with blind elementary attacks, whereas security is not
limited to certain specific attacks. Indeed, security encompasses robustness
and intentional attacks~\cite{Kalker2001,ComesanaPP05bis}. The best attempt to
give an elegant and concise definition for each of these two terms was
proposed in \cite{Kalker2001}. Following Kalker, we will consider in this
research work the two following definitions:  

\begin{definition}[Security~\cite{Kalker2001}]\label{def:security}
Watermarking security refers to the
inability by unauthorized users to have access to the raw watermarking channel
[...] to remove, detect and estimate, write or modify the raw watermarking
bits.
\end{definition}

\begin{definition}[Robustness~\cite{Kalker2001}]\label{def:robustness}
Robust watermarking is a mechanism to create a communication channel that is
multiplexed into original content [...] It is required that, firstly, the
perceptual degradation of the marked content [...] is minimal and, secondly,
that the capacity of the watermark channel degrades as a smooth function of the
degradation of the marked content. 
\end{definition}

In this article, we will focus more specifically on the security aspects, which have been formalized in the 
Simmons' prisoner problem.

\subsection{The prisoner problem}

In the prisoner problem of Simmons~\cite{Simmons83}, Alice and Bob are in jail,
and they want to, possibly, devise an escape plan by exchanging hidden messages
in innocent-looking cover contents (Fig.~\ref{fig:simmons-prisonner-problem}).
These messages are to be conveyed to one another by a common warden, Eve, who
over-drops all contents and can choose to interrupt the communication if they
appear to be stego-contents.

\begin{figure}
\begin{center}
\includegraphics[width=8cm]{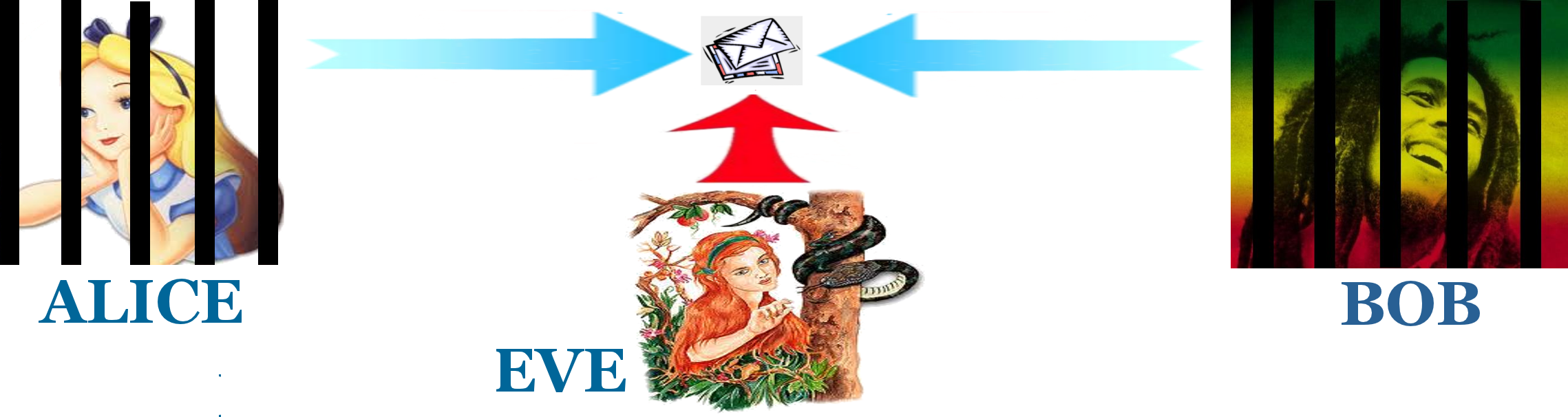}
\end{center}
\caption{Simmons' prisoner problem~\cite{Simmons83}}
\label{fig:simmons-prisonner-problem}
\end{figure}

\subsection{Classification of Attacks}\label{sec:attack-classes}

In the steganography framework, in the Simmons' prisoner problem context,   attacks have been classified in~\cite{Cayre2008} as follows.

%
%
%

\begin{definition}[Classes of attacks]\ 
\begin{description}
\item[WOA:] A \emph{Watermark-Only Attack} occurs when an attacker has only
access to several watermarked contents.

\item[KMA:] A \emph{Known-Message Attack} occurs when an attacker has access to
several pairs of watermarked contents and corresponding hidden messages.

\item[KOA:] A \emph{Known-Original Attack} is when an attacker has access to
several pairs of watermarked contents and their corresponding original versions.

\item[CMA:] A \emph{Constant-Message Attack} occurs when the attacker observes
several watermarked contents and only knows that the unknown hidden message is the same in all contents.
\end{description}
\end{definition}

A synthesis of this classification is given in
Table~\ref{table:attack-classification}.
\begin{table}[h]
\begin{center}

 \begin{tabular}{|c||c|c|c|}
  \hline
   \textbf{Class} &  \textbf{Original content} &  \textbf{Stego content} & 
   \textbf{Hidden message}\\
    \hline 
    \hline 
 \textbf{WOA} &   &    $\times$ & \\
  \hline
 \textbf{KMA}  & & $\times$ & $\times$ \\
  \hline
 \textbf{KOA} & $\times$ & $\times$ & \\
 \hline
 \textbf{CMA} & & & $\times$ \\
  \hline 
  \end{tabular}
  \caption{Watermarking attacks classification in context of~\cite{Kalker2001}}
  \label{table:attack-classification}
  \end{center}
  \end{table}
 
In this article, we will focus more specifically on the ``Watermark-Only
Attack'' situation, which is the most relevant category when considering
 anonymity and privacy protection through the Internet.

\subsection{Reminder about Stego-Security}\label{sec:stego-security}

 The stego-security, defined in the 
\emph{Watermark-Only Attack} (WOA) framework, is the highest security level that can be defined in this
setup~\cite{Cayre2008}.

\begin{definition}[Stego-Security]
\label{Def:Stego-security}\  
Let $\mathds{K}$ be the set of embedding keys, $p(X)$ the probabilistic model of $N_0$ initial host contents, and $p(Y|K_1)$  the probabilistic model of $N_0$ watermarked contents.
Moreover, each host content has been
watermarked with the same secret key $K_1$ and the same embedding function $e$.
Then $e$ is said \emph{stego-secure} if:
$$\forall K_1 \in \mathds{K}, p(Y|K_1)=p(X).$$
\end{definition}

Until now, only three schemes have been proven 
stego-secure. On the one hand, the authors of \cite{Cayre2008} have established
that the spread spectrum technique called Natural Watermarking is stego-secure
when its distortion parameter $\eta$ is equal to $1$. On the other hand, we have proposed in~\cite{gfb10:ip} and~\cite{fgb11:ip} two other data hiding
schemes satisfying this security property.



\section{Security Study}

Let us prove that,

\begin{proposition}
$\mathcal{DI}_3$ is stego-secure.
\end{proposition}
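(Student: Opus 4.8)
The plan is to reduce stego-security to a single statement: that for every fixed key, the law of the stego-content coincides with the uniform law on $\mathbb{B}^\mathsf{N}$. In the \emph{Watermark-Only Attack} setting I would model the message $m$ as uniformly distributed over $\mathbb{B}^\mathsf{P}$ and independent of the host LSCs $x^0$, the latter being uniform over $\mathbb{B}^\mathsf{N}$ by the standing hypothesis on the cover media. I identify the embedding key $K_1$ with the strategy $S$ and condition on it throughout. Under these hypotheses $p(X)$ is exactly the uniform distribution on $\mathbb{B}^\mathsf{N}$, so it suffices to prove that $y=x^\lambda$ is uniform for every admissible $S$.

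First I would make the embedding map explicit, coordinate by coordinate. Since $S \in \mathds{S}_\mathsf{P}$, every term $S^n$ lies in $\llbracket 1;\mathsf{P}\rrbracket$; hence any index $i$ never selected by $S$ satisfies $x_i^n = x_i^{n-1}$ for all $n$ by the update rule, so $y_i = x_i^0$ on those $\mathsf{N}-\mathsf{P}$ coordinates. For the $\mathsf{P}$ remaining coordinates I would exploit the injectivity of $\left\{S^k , k \in \llbracket \lambda-\mathsf{P}+1;\lambda\rrbracket\right\}$: these $\mathsf{P}$ distinct values in $\llbracket 1;\mathsf{P}\rrbracket$ form a permutation of $\llbracket 1;\mathsf{P}\rrbracket$, so each such index $i$ is selected exactly once during the last $\mathsf{P}$ steps, at some step $n_i$, where the rule sets $x_i^{n_i}=m_{S^{n_i}}=m_i$. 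Injectivity then forbids $i$ from being chosen again, freezing the value, so $y_i = m_i$.

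With $y$ pinned down this way, the conclusion follows from independence and uniformity. For a fixed $S$ the stego-content is the deterministic rearrangement that writes the $\mathsf{P}$ message bits into the message positions and copies the host bits elsewhere; since $(m_i)$ are i.i.d.\ uniform, $(x_i^0)$ are i.i.d.\ uniform, and the two families are mutually independent, the $\mathsf{N}$ coordinates of $y$ are jointly i.i.d.\ uniform. Hence $y$ is uniform on $\mathbb{B}^\mathsf{N}$, i.e.\ $p(Y|K_1)$ is the uniform law. As this holds for every $K_1 \in \mathds{K}$, we obtain $p(Y|K_1)=p(X)$ for all $K_1$, which is precisely stego-security.

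The main obstacle I anticipate is the coordinate-wise identification of $y$ in the second step, in particular the argument that the injectivity of the last $\mathsf{P}$ terms of $S$ forces each message position to receive its bit $m_i$ last and retain it through step $\lambda$. Once this freezing property is established the probabilistic part is routine; notably, the iterative ``chaotic'' dynamics plays no role beyond guaranteeing this final placement, and it is the uniformity of both the cover and the (assumed encrypted) message that drives the equality of laws.
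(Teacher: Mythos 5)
Your proof is correct, but it takes a genuinely different route from the paper's. The paper proceeds by induction on $n$: assuming $x^n \sim \mathbf{U}\left(\mathbb{B}^\mathsf{N}\right)$, it decomposes the event $\{x^{n+1}=k\}$ into a disjoint union (either $x^n$ already equalled $k$ and the written bit agreed, or $x^n$ differed from $k$ exactly at the selected position $j$ and $m_j=k_j$), computes each probability using $P(S^n=j)=\frac{1}{\mathsf{P}}$ and independence claims, and sums to recover $\frac{1}{2^\mathsf{N}}$; it therefore needs the hypothesis $S \sim \mathbf{U}\left(\mathbb{S}_\mathsf{P}\right)$ and never invokes the injectivity of the tail of $S$. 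You instead fix $S$, identify $y=x^\lambda$ in closed form --- $y_i=m_i$ on the $\mathsf{P}$ selected coordinates (guaranteed to be all of $\llbracket 1;\mathsf{P}\rrbracket$ by the injectivity of $\left\{S^k,\ k\in\llbracket \lambda-\mathsf{P}+1;\lambda\rrbracket\right\}$) and $y_i=x_i^0$ elsewhere --- and conclude by independence and uniformity of $m$ and $x^0$. This buys you several things: you prove the conditional law $p(Y|K_1)$ is uniform for \emph{every} admissible key, which is literally what Definition~\ref{Def:Stego-security} requires, rather than the marginal law averaged over a uniform $S$; you need no distributional assumption on $S$ at all; and you avoid the delicate step-$n$ independence assertions (in the paper, $x^n$ is itself a function of $m$ and of $S^1,\dots,S^{n-1}$, so the claimed independence of $(x^n=\tilde{k_j})$ and $(m_{S^n}=k_j)$ deserves more care than ``they concern different sequences''). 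What the paper's induction buys in exchange is the uniformity of every intermediate state $x^n$, not just the final one. One small remark: the ``main obstacle'' you anticipate is in fact a non-issue, since whenever $S^n=i$ the rule writes $m_{S^n}=m_i$, the \emph{same} value at every visit to coordinate $i$; so the last write to position $i$ is always $m_i$ and no freezing argument is needed, though your injectivity argument is also valid.
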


\begin{proof}
Let us suppose that $x^0 \sim
\mathbf{U}\left(\mathbb{B}^\mathsf{N}\right)$, $m \sim
\mathbf{U}\left(\mathbb{B}^\mathsf{P}\right)$, and $S \sim
\mathbf{U}\left(\mathbb{S}_\mathsf{P}\right)$ in a $\mathcal{DI}_3$ setup, where $\mathbf{U}(X)$ describes
the uniform distribution on $X$. We
will prove by a mathematical induction that $\forall n \in \mathds{N}, x^n \sim \mathbf{U}\left(\mathbb{B}^\mathsf{N}\right)$. The base case is obvious
according to the uniform repartition hypothesis. 

Let us now suppose that the statement $x^n \sim
\mathbf{U}\left(\mathbb{B}^\mathsf{N}\right)$ holds for some $n$ (
$P\left(x^{n}=k\right)= \frac{1}{2^N}$).

 For a given $k  \in \mathbb{B}^N$,
 we denote by $\tilde{k_i} \in \mathbb{B}^N$ the vector defined by:\newline
 $\forall i \in  \llbracket 0;\mathsf{N-1}\rrbracket,$
 if
 $k=\left(k_0,k_1,\hdots,k_i,\hdots,k_{\mathsf{N}-2},k_{\mathsf{N}-1}\right)$,\newline
 then $\tilde{k}_i=\left( k_0,k_1,\hdots,\overline{k_i},\hdots,k_{\mathsf{N}-2},k_{\mathsf{N}-1} \right)$, where
 $\overline{x}$ is the negation of the bit $x$.

Let $p$ be defined by: $p=P\left(x^{n+1}=k\right)$.
Let
$E_{j}$ and $E$ be the events defined by:
$\forall j \in \llbracket
0;\mathsf{P-1}\rrbracket, E_j=
(x^n = \tilde{k_j}) \wedge (S^n = j) \wedge (m_{S^n} = k_j),$$
$$E = (x^n = k) \wedge (m_{S^n} =
x_{S^n}).$ So, $p=P\left(E \vee \bigvee_{j = 0}^{\mathsf{N-1}}
E_j \right).$

On the one hand, $\forall j \in \llbracket 0;\mathsf{P-1}\rrbracket,$ the event
$E_j$ is a conjunction of the sub-events $(S^n = j)$ and other sub-events. $\forall j \in
\llbracket 0;\mathsf{P-1}\rrbracket,$ all the sub-events $(S^n = j)$ are clearly
pairwise disjoints, so all the evente $E_j$ are pairwise disjoints too.

On the other hand, $\forall j \in \llbracket 0;\mathsf{P-1}\rrbracket,$ the
events $E_j$ and $E$ are disjoints, because in $E_j$,  a conjunction
of the sub-event $(x^n = \tilde{k_j})$ with other sub-events appears, whereas in $E$ 
a conjunction of the sub-event $(x^n = k)$ with other sub-events appears, and the two  sub-events
$(x^n = \tilde{k_j})$ and $(x^n = k)$ are clearly disjoints.

As a consequence, using the probability law concerning the reunion of disjoint
events we can claim that: $p=P(E) + \sum_{j=0}^N P(E_j)$.

Now we evaluate both $P(E)$ and $P(E_j)$.

\begin{enumerate}
  \item \emph{The case of $P(E)$}: As the two events $(x^n = k)$ and $(m_{S^n}
  = x_{S^n})$ concern two different sequences, they are clearly independent.

Then, by using the inductive hypothesis: $P(x^n = k) = \frac{1}{2^N}$. So,

\begin{equation*}
\begin{array}{ccl}
p(E) & = & P(x^n = k)  \times  P(m_{S^n} = x_{S^n})\\
& = & \frac{1}{2^\mathsf{N}}  \times  \left[P(m_{S^n} = 0)P(x_{S^n} =
0)\right.\\
& &+  \left.P(m_{S^n} = 1) P(x_{S^n} = 1)\right]\\
& = & \frac{1}{2^\mathsf{N}}  \times  \left[P(m_{S^n} = 0)P(x_{S^n} =
0)\right.\\
& & \left. +  P(m_{S^n} = 1) (1 - P(x_{S^n} = 0))\right]\\
& = & \frac{1}{2^\mathsf{N}}  \times  \left[\frac{1}{2}P(x_{S^n} = 0) +
\frac{1}{2} (1 - P(x_{S^n} = 0))\right]\\
& = & \frac{1}{2^{\mathsf{N}+1}}.   \\
\end{array}
\end{equation*}

  \item \emph{Evaluation of $P(E_j)$}: As the three events $(x^n =
  \tilde{k_j})$, $(S^n = j)$, and $(m_n = k_j)$ deal with three different sequences, they are clearly independent. So
\begin{equation*}
\begin{array}{ccl}
P(E_j) & = & P(x^n = \tilde{k_j})  \times  P(S^n = j) \times  P(m_{S^n} = k_j)\\
 & = & \frac{1}{2^\mathsf{N}}  \times  \frac{1}{\mathsf{P}}  \times
 \frac{1}{2}\\
  &=& \frac{1}{\mathsf{P}}  \times  \frac{1}{2^\mathsf{N+1}}, \\
\end{array}
\end{equation*}
\end{enumerate}
due to the hypothesis of uniform repartition of $S$ and $m$.
\begin{equation*}
\begin{array}{lccccc}
\text{Consequently,}&p&=&P(E) &+ &\sum_{j=0}^\mathsf{P} P(E_j)\\
&&=&\frac{1}{2^\mathsf{N+1}} & + &\sum_{j=0}^\mathsf{P}
\left(\frac{1}{\mathsf{P}} \times \frac{1}{2^\mathsf{N+1}}\right)\\
&&=&\frac{1}{2^\mathsf{N}}.&&\\
\end{array}
\end{equation*}

Finally, 
$P\left(x^{n+1}=k\right)=\frac{1}{2^N}$, which leads to $x^{n+1} \sim
\mathbf{U}\left(\mathbb{B}^\mathsf{N}\right)$.
This result is true $\forall n \in \mathds{N}$, we thus have proven that the
stego-content $y$ is uniformly distributed in the set of possible stego-contents:
$y \sim \mathbf{U}\left(\mathbb{B}^\mathsf{N}\right) \text{
when } x \sim \mathbf{U}\left(\mathbb{B}^\mathsf{N}\right).$
\end{proof}

\begin{remark}[Distribution of LSCs]
We have supposed that $x^0 \sim
\mathbf{U}\left(\mathbb{B}^\mathsf{N}\right)$
to prove the stego-security of the data hiding process
$\mathcal{DI}_3$.
This hypothesis is the most restrictive one, but it can 
be obtained at least partially in two possible manners. 
Either a channel that appears to be random (for instance, when
applying a chi squared test) can be found in the media. Or a systematic process
can be applied on the images to obtain this uniformity, as follows.
Before embedding the hidden message, all the original LSCs must be replaced by randomly
generated ones, hoping so that such cover media will be considered to be noisy by any given attacker.

Let us remark that, in
the field of data anonymity for privacy on the 
Internet,  we are in the
``watermark-only attack'' framework.
As it has been recalled  in
Table~\ref{table:attack-classification},
in that framework,
the attacker has only access to stego-contents, having so no knowledge
of the original media, before introducing the message in the random channel (LSCs).
However, this assumption of the existence of a random channel,
natural or artificial, into the cover images, is clearly
the most disputable one of this research work.
The authors' intention is to investigate such hypothesis
more largely in future works, by investigation the
distribution of several LSCs contained in a
large variety of images chosen randomly on the Internet. Among other things, we will check if some well-defined
LSCs are naturally uniformly distributed in most cases. 
To conduct such studies, we intend to use the well-known NIST (National Institute of Standards and Technology of the U.S.
Government) tests suite~\cite{ANDREW2008}, the
DieHARD~\cite{Marsaglia1996} battery, or the stringent TestU01~\cite{L'ecuyer2009}. 
Depending on the
results of this search for randomness in natural images,
 the need to introduce an artificial random channel could be possibly removed.
\end{remark}

\begin{remark}[Distribution of the messages $m$]
 In order to prove the stego-security of the data hiding
 process
$\mathcal{DI}_3$, we have supposed that  $m \sim
\mathbf{U}\left(\mathbb{B}^\mathsf{P}\right)$. This hypothesis is not
really restrictive. Indeed, to encrypt the message before its
embedding into the LSCs of cover media if sufficient to achieve this goal. To say it different, in order to be in the conditions of applications of the process
$\mathcal{DI}_3$, the hidden message must be encrypted.
\end{remark}

\begin{remark}[Distribution of the strategies $S$]
 To prove the stego-security of the data hiding
 process
$\mathcal{DI}_3$, we have finally supposed that  $S \sim
\mathbf{U}\left(\mathbb{S}_\mathsf{P}\right)$.  This hypothesis is not
restrictive too, as any cryptographically secure
pseudorandom generator (PRNG)
satisfies this property. With such PRNGs, it is impossible in 
polynomial time, to make the distinction between  random numbers and  numbers provided by
these generators. For instance, \emph{Blum Blum Shub
(BBS)}~\cite{junod1999cryptographic}, \emph{Blum Goldwasser (BG)}~\cite{prng-Blum-goldwasser}, 
or \emph{ISAAC}~\cite{prng-isaac},  are convenient here.
\end{remark}

\section{Steganalysis}\label{section:steganalysis}

The steganographic scheme detailed along these lines has been compared to 
state of the art steganographic approaches, namely
YASS~\cite{DBLP:conf/ih/SolankiSM07},
HUGO~\cite{DBLP:conf/ih/PevnyFB10}, and
nsF5~\cite{DBLP:conf/mmsec/FridrichPK07}.

The steganalysis is based on the BOSS image
database~\cite{DBLP:conf/ih/BasFP11} 
which consists in a set of 10 000 512x512 greyscale images. 
We randomly selected 50 of them to compute the cover set.
Since YASS and nsF5 are dedicated to JPEG 
support, all these images have been firstly translated into JPEG format
thanks to the \verb+mogrify+ command line. 
To allow the comparison between steganographic schemes, the relative payload
is always set with 0.1 bit per pixel. Under that constrain,
the embedded message $m$ is a sequence of 26214 randomly generated bits. 
This step has led to distinguish four sets of stego contents, one for each steganographic approach.

Next we use the steganalysis tool developed by the HugoBreakers 
team~\cite{holmes11,ensemble11} based on AI classifier and which won 
the BOSS competition~\cite{DBLP:conf/ih/BasFP11}. 
Table~\ref{table:holme} summarizes these steganalysis results expressed 
as the error probabilities of the steganalyser. The errors are the mean 
of the false alarms and of the missed detections. An error that is closed 
to 0.5 signifies that deciding whether an image contains 
a stego content is a random choice for the steganalyser. 
Conversely, a tiny error denotes that the steganalyser can easily classify
stego content and non stego content.   

\begin{table}[ht]
\begin{center}
\begin{tabular}{|l|l|l|l|l|}
\hline
Steganographic Tool & $\mathcal{DI}_1$ & YASS & HUGO & NsF5 \\
\hline
Error Probability   & 0.4133& 0.0067& 0.495 & 0.47 \\
\hline
\end{tabular}
\end{center}
\caption{Steganalysis results of HugoBreakers steganalyser applied on 
steganographic scheme}\label{table:holme}
\end{table}


The best result is obtained by HUGO, which is closed to the perfect
steganographic approach to the considered steganalyser, since the error is about 0.5. However, even if 
the approach detailed along these lines has not any optimization, these 
first experiments show promising results.
We finally notice that the HugoBreakers's steganalyser should outperform
these results  on larger image databases, \textit{e.g.}, when applied on the
whole BOSS image database.

%

\section{Conclusion and Future Work}

Steganography is a real alternative to
guarantee anonymity through the Internet. For instance, 
the scheme presented in this article
offers a secure solution to achieve this goal, thanks to its stego-security. Even if this new scheme
$\mathcal{DI}_3$ does not possess topological properties (unlike the
$\mathcal{CIS}_2$), its level of security seems to be sufficient for
Internet applications. 
Indeed, we take place into the \emph{Watermark Only Attack (WOA)} framework, where stego-security 
is the highest level of security.
Additionally, this new scheme is faster than
$\mathcal{CIS}_2$. This is a major advantage for an utilization through
the Internet, to respect response times of web sites. 
Moreover, for this first version of the process, the  steganalysis results
are promising. 

In future work, various improvements of this scheme are planed to obtain better scores against steganalysers. 
For instance, LSCs will be embedded into various frequency domains. 
The robustness of the proposed scheme will be evaluated
too~\cite{bcg11:ij}, to determine whether this
information hiding algorithm can be relevant in other
Internet domains interesting by data hiding techniques,
as the semantic web.
Finally a cryptographic
approach of information hiding security is currently investigated, enlarging the Simmons' prisoner problem~\cite{arxiv-bgh2012-DBLP:journals/corr/abs-1112-5245}, and we intend  to evaluate the proposed
scheme in this framework.


\bibliographystyle{plain}
\bibliography{jabref}

\end{document}